\newcommand{\RN}[1]{%
  \textup{\uppercase\expandafter{\romannumeral#1}}%
}
\newcommand{\R}{\mathbb{R}}
\newtheorem{theorem}{Theorem}
\newtheorem{proposition}[theorem]{Proposition}
\newtheorem{lemma}[theorem]{Lemma}
\theoremstyle{definition}
\newtheorem{definition}[theorem]{Definition}
\xpretocmd{\@adminfootnotes}{\let\@makefntext\BHFN@OldMakefntext}{}{}
\renewcommand\@makefntext[1]{%
  \@ifundefined{@makefnmark}
    {}
    {%
     \renewcommand\@makefnmark{%
       \mbox{%
         \textsuperscript{%
           \normalfont
           \hyperref[\BackrefFootnoteTag]{\@thefnmark}%
         }%
       }\,%
     }%
     \BHFN@OldMakefntext{#1}%
  }%
}
\begin{document}

\title[mass at null infinity]{Total mass and limits of quasi-local \\ mass at
future null infinity}

\author{Mu-Tao Wang}
\address{Mu-Tao Wang\\
Department of Mathematics\\
Columbia University, USA}
\email{mtwang@math.columbia.edu}

\thanks{This material is based upon work supported by the National Science
Foundation under Grant Number DMS 1810856 (Mu-Tao\ Wang).  The author would like to thank Po-Ning, Chen, Jordan Keller, and Ye-Kai Wang for helpful discussions, and in particular, Ye-Kai Wang for going over a preliminary version of this article and suggestions.}
\keywords{}

\maketitle
\begin{abstract}
An idealized observer of an astronomical event is situated at future null infinity, where light rays emitted from the source approach. Mathematically, null infinity corresponds to the portion of the spacetime boundary defined by equivalence classes of null geodesics. But what can we observe at future null infinity? In the note, we start by reviewing the description of null infinity in terms of the Bondi-Sachs coordinate system. In particular, we reformulate  
and extend the invariance of Bondi mass and the mass loss formula in terms of a modified mass aspect 2-form. 
At the end, we discuss new results about the limit of quasilocal mass of unit spheres at null infinity in a Bondi-Sachs coordinate system. This is based on joint work with Po-Ning Chen, Jordan Keller, Ye-Kai Wang, and Shing-Tung Yau. 
\end{abstract}

\section{Introduction}
This note is about mathematical general relativity and we shall discuss some recent applications of the theory of quasilocal mass \cite{Wang-Yau1, Wang-Yau2} to the study of gravitational radiation. We consider an isolated gravitating system that corresponds to an asymptotically flat spacetime, where gravity is weak at infinity.  The future null infinity, denoted as $\mathscr{I}^+$, consists of equivalent classes of future null geodesics \footnote{This is similar to the definition of ideal boundary of a Cartan-Hadamard manifold as equivalent classes of geodesic rays. In physics literature, null infinity arises as a conformal compactification of the physical spacetime introduced by Penrose, see \cite{Penrose3, Penrose4, Geroch}}.

\begin{center}
{\includegraphics[height=5cm, angle=0]{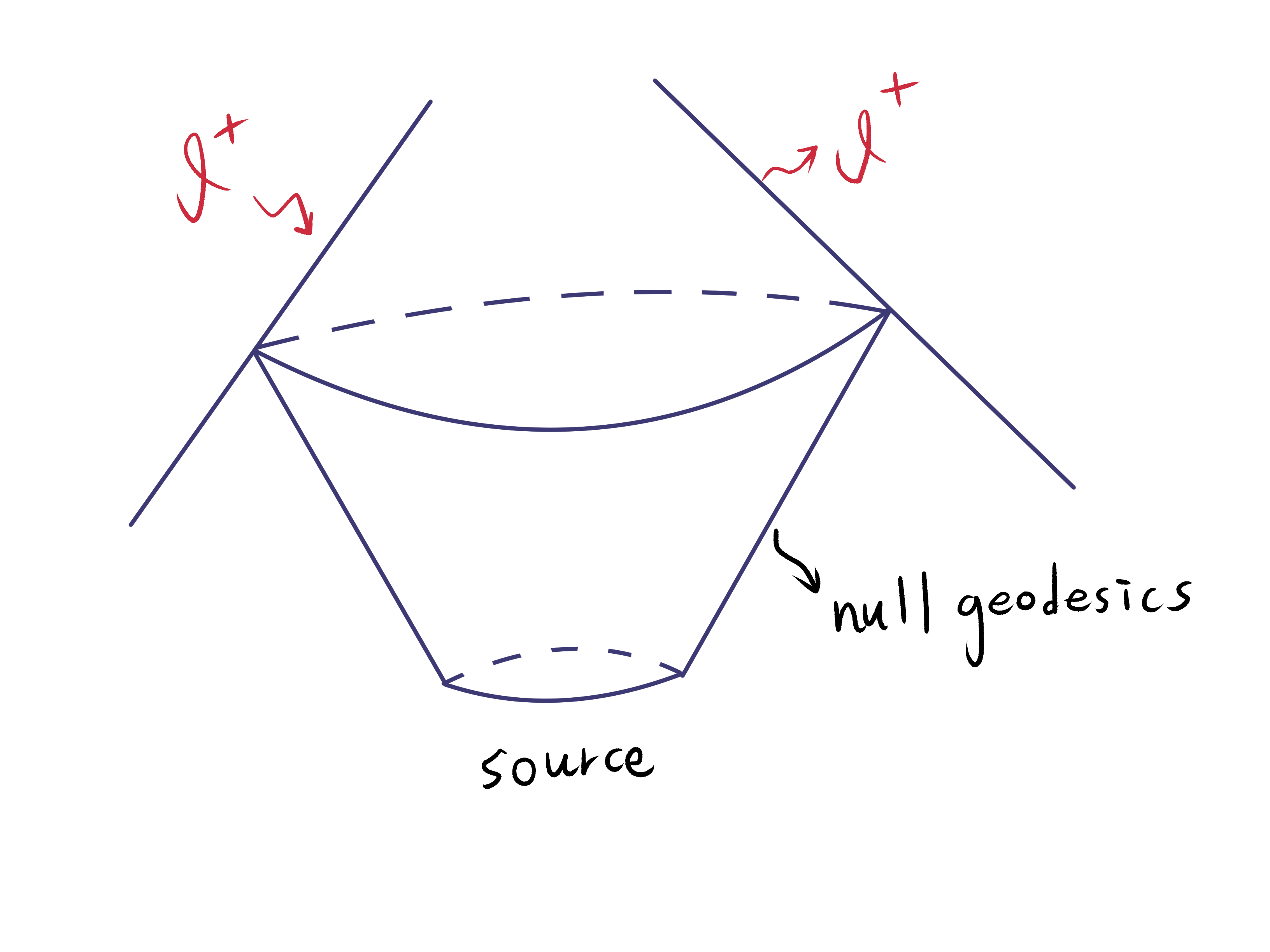}}
\end{center}
 The mathematical (nonlinear) theory of gravitational radiation started around 1960's with the pioneering works of Bondi, Trautman, Sachs, van der Burg, Metzner, Penrose, Newman, Geroch etc \cite{Bondi, Trautman1, Trautman2, BVM, Penrose3, VDB, Sachs, NP1, NP2, Geroch}. The spacetime near $\mathscr{I}^+$ is described in terms of the Bondi-Sachs coordinates $(u, r, \theta, \phi)$ which are chosen in the following way.  Level sets of $u$ are null hypersurfaces generated by null geodesics, $\theta$ and $\phi$ are extended by constancy along the integral curves of the gradient vector field of $u$, and 
 $r$ corresponds to the ``area distance".\footnote{$s=2\ln r$ corresponds to the parameter of the inverse mean curvature (lapse) flow, which was an important tool in the study of the null Penrose inequality, see \cite{Sauter}, as well as the Riemannian Penrose inequality, see \cite{HI}.}
 
 More precisely, one starts with $(u, \theta, \phi)$ coordinates that satisfy the above conditions. For any other coordinate variable $\bar{r}$, in terms of the coordinates $(u, \bar{r}, \theta, \phi)$, the above conditions imply the metric is of the form 
 \begin{equation}-\bar{U}\bar{V} du^2-2\bar{U} dud\bar{r}+\bar{g}_{ab}(dx^a+\bar{W}^a du)(dx^b+\bar{W}^b du), \end{equation}
where $a, b=2, 3$ and $x^2=\theta, x^3=\phi $. Let $\sigma_{ab}$ be a standard round metric on the unit 2-sphere $(S^2, x^a)$. We define a new coordinate variable $r$ by
\[ r=(\det \bar{g}_{ab}/\det \sigma_{ab})^{1/4}\] and denote $ h_{ab}=r^{-2} \bar{g}_{ab}$.

\begin{center}
{\includegraphics[height=5cm]{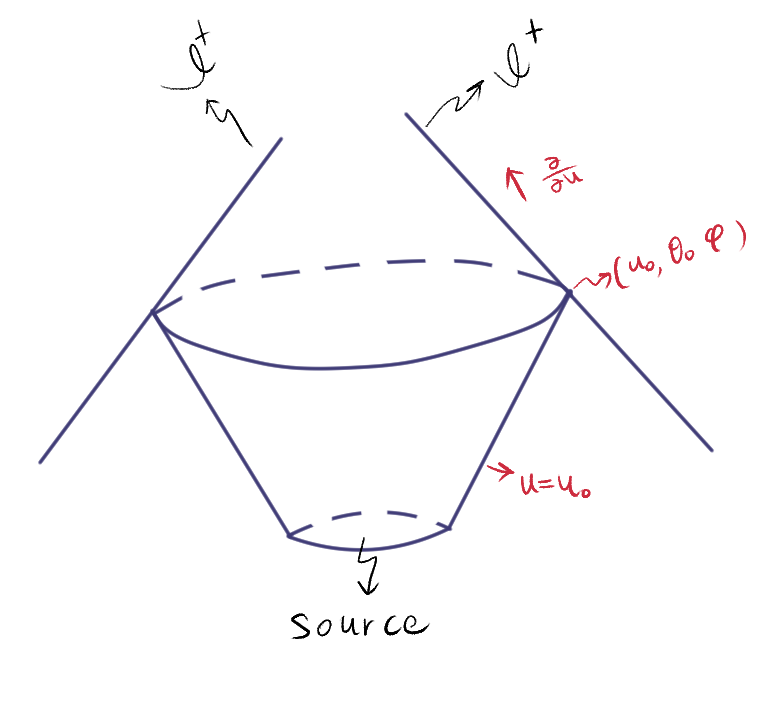}}
\end{center}
In terms of the Bondi-Sachs coordinate system $(u, r,  x^2, x^3)$, the spacetime metric takes the form
\begin{equation}\label{spacetime_metric}g_{\alpha\beta}dx^\alpha dx^\beta= -UV du^2-2U dudr+r^2 h_{ab}(dx^a+W^a du)(dx^b+W^b du).\end{equation} The index conventions here are $\alpha, \beta=0,1, 2, 3$, $a, b=2, 3$, and $u=x^0, r=x^1$. See \cite{CMS, MW} for more details of the construction of the coordinate system. 
The metric coefficients $U, V, h_{ab}, W^a$  of \eqref{spacetime_metric} depend on $u, r, \theta, \phi$, but the assumption on $r$ implies that $\det h_{ab}$ is independent of $u$ and $r$. These gauge conditions thus reduce the number of metric coefficients of a Bondi-Sachs coordinate system to six (there are only two independent components in $h_{ab}$). On the other hand, the boundary conditions $U\rightarrow 1$, $V\rightarrow 1$, $W^a\rightarrow 0$, $h_{ab}\rightarrow \sigma_{ab}$ are imposed as $r\rightarrow \infty$ (such boundary conditions may not be satisfied in a radiative spacetime).   The special gauge choice of the Bondi-Sachs coordinates implies a hierarchy among the vacuum Einstein equations, see \cite{MW, HPS}.

 Assuming the outgoing radiation condition \cite{Sachs, VDB, MW, Winicour, VK}, the boundary condition and the vacuum Einstein equation imply that as $r\rightarrow \infty$, all metric coefficients can be expanded in inverse integral powers of $r$.\footnote{The outgoing radiation condition assumes the traceless part of the $r^{-2}$ term in the expansion of $h_{ab}$ is zero. The presence of this traceless term will lead to a logarithmic term in the expansions of $W^a$ and $V$. Spacetimes with metrics which admit an expansion in terms of $r^{-j}\log^i r$ are called ``polyhomogeneous" and are studied in \cite{CMS}. They do not obey the outgoing radiation condition or the peeling theorem \cite{VK}, but they do appear as perturbations of the Minkowski spacetime by the work of Christodoulou-Klainerman \cite{CK}.} In particular, 
\[\begin{split} U&=1+O(r^{-2}),\\
V&=1-\frac{2m}{r}+O(r^{-2}),\\
W^a&=O(r^{-2}),\\
h_{ab}&={\sigma}_{ab}+\frac{C_{ab}}{r}+O(r^{-2})\end{split},\] where  $m=m(u, x^a)$ is the mass aspect and $C_{ab}=C_{ab}(u, x^a)$ is the shear tensor of this Bondi-Sachs coordinate system.
The Bondi-Sachs  energy-momentum 4-vector associated with a $u$-slice is then \[e(u)=\frac{1}{4\pi}\int_{S^2_\infty} m(u, x^a) dv_\sigma,\,\, p_i(u)=\frac{1}{4\pi}\int_{S^2_\infty} m(u, x^a) Y_i dv_\sigma, i=1,2, 3\] where $\{Y_i=Y_i(x^a), i=1, 2, 3\}$ is an orthonormal basis of the $(-2)$ eigenspace of $\Delta=\Delta_\sigma$ (these are the usual $\ell=1$ spherical harmonics) and $dv_\sigma$ is the area form of the metric $\sigma$. The positivity of the Bondi mass \footnote{This was sometimes called the Bondi-Trautman (or Trautman-Bondi) mass. However, comparison with Trautman's work \cite{Trautman1, Trautman2} and Bondi's work \cite{Bondi, BVM} shows that Trautman worked with a
linear approximation, while Bondi used the full nonlinear theory of General Relativity. The setting of this article is fully nonlinear.} \begin{equation}\label{pmt} \sqrt{e^2-\sum_i p_i^2}\end{equation} was proved by Schoen-Yau \cite{SY} and Horowitz-Perry \cite{HP} under the dominant energy condition and a global assumption on horizon, see also \cite{CJS} and \cite{HYZ}. In Section 4 there is a discussion about the positivity of the Bondi mass and this global assumption. 
The supplementary equations imply the following equation satisfied by the mass aspect along $\mathscr{I}^+$:
\begin{equation}\label{du_mass_aspect}\partial_u m=\frac{1}{4}\nabla^a\nabla^b (\partial_u C_{ab})-\frac{1}{8}|\partial_u C|_\sigma^2,\end{equation} where $|\partial_u C|_\sigma^2=\sigma^{ac}\sigma^{bd} \partial_u C_{ab}\partial_u C_{cd}$. 
Integrating over $S^2_\infty$ with the metric $\sigma$ yields the well-known Bondi mass loss formula:
\begin{equation}\label{energy_loss1}\frac{d}{du} e(u)=-\frac{1}{32\pi}\int_{S^2_\infty} |\partial_u C|_\sigma^2 dv_\sigma \leq 0.\end{equation} In particular,
\begin{equation} \label{energy_loss2} e(u_1)\leq e(u_0) \text{ if } u_1\geq u_0.\end{equation}

\begin{center}
{\includegraphics[height=5cm]{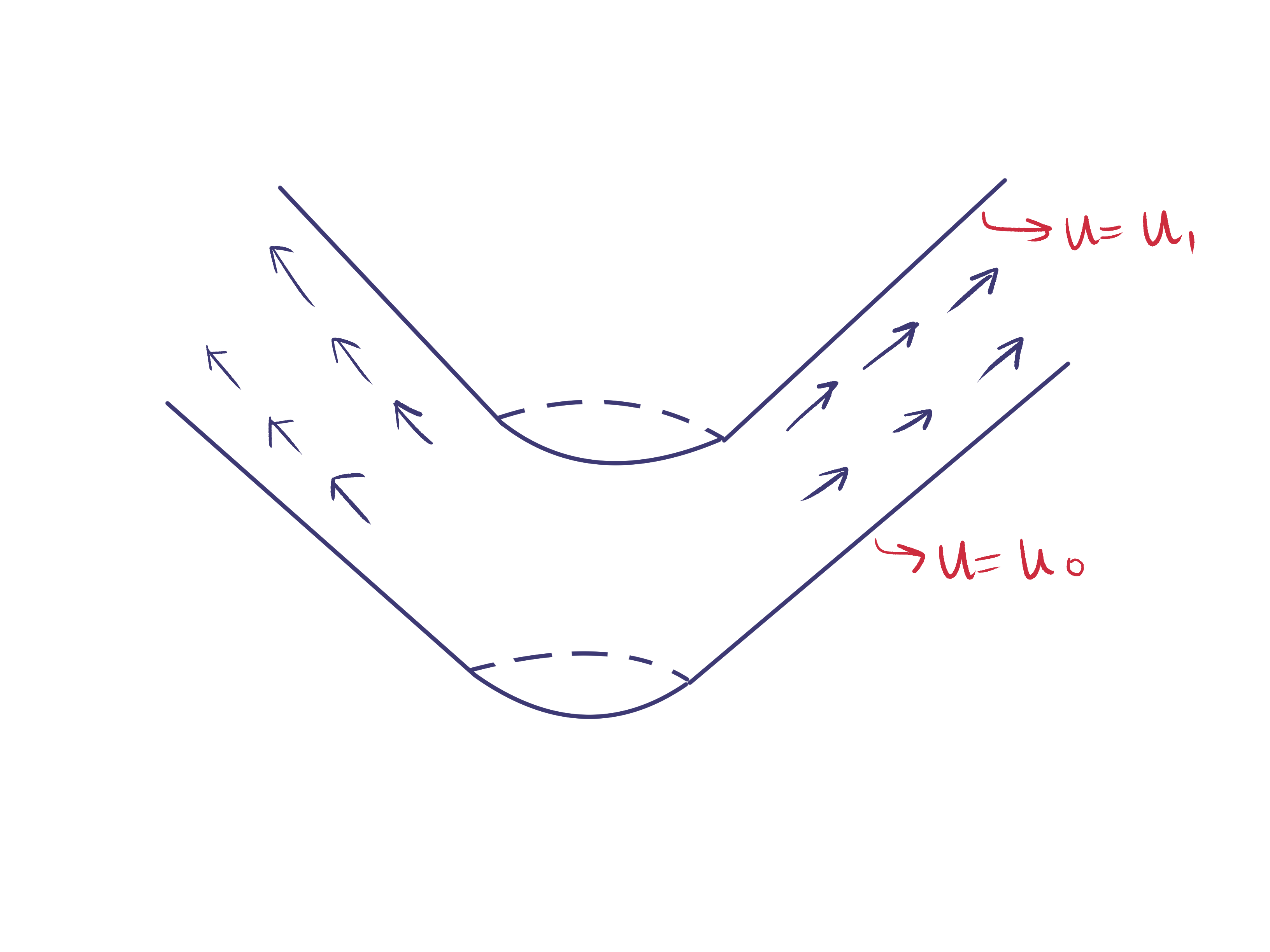}}
\end{center}

This formula indeed corresponds to energy loss, see \cite{HYZ} for a monotonicity formula for the quantity $e-\sqrt{\sum_i p_i^2}$.

\section{Invariance under the BMS group}
Rescaling the spacetime metric \eqref{spacetime_metric} by $r^{-2}$ as $r\rightarrow \infty$, the limit of $r^{-2}g_{\alpha\beta} dx^\alpha dx^\beta$ approaches  $ \sigma_{ab} dx^a dx^b$, or the null metric on $\mathscr{I}^+$.\footnote{This is a special case of conformal compactification. In general, the metric on the unphysical spacetime is of the form $\Omega^2 g_{\alpha\beta} dx^\alpha dx^\beta$ and $\Omega=0$ corresponds to $\mathscr{I}^+$, see \cite{Penrose3, Penrose4, Geroch}.}  Therefore, $\mathscr{I}^+$ can be view as a null three-manifold:  
\[ \mathscr{I}^+= I\times (S^2, \sigma_{ab})\] with $u\in I$, $x^a\in S^2$. 

Each spacetime Bondi-Sachs coordinate system $(u, r, x^a)$ induces such a limiting coordinate system $(u, x^a)$ on $\mathscr{I}^+$, together with the mass aspect $m(u, x^a)$ and the shear $C_{ab}(u, x^a)$. Such a Bondi-Sachs coordinate system is by no means unique and the BMS group, which corresponds to the diffeomorphism group that preserves the gauge and boundary conditions, acts on the set of Bondi-Sachs coordinate systems. 

 A BMS group element induces a diffeomorphism $\mathfrak{g}$ on $\mathscr{I}^+$ that is of the following form:
\begin{equation}\label{bms1} \mathfrak{g}: (u, x^a)\mapsto (\bar{u}, \bar{x}^A), a=2, 3, A=2, 3\end{equation}
such that \begin{equation}\label{bms2} \begin{cases} \bar{x}^A&=g^A(x^a)\\ \bar{u}&=K(x^a)(u+f(x^a))\end{cases}\end{equation}
where $g: (S^2, \sigma)\rightarrow (S^2, \bar{\sigma})$ is a conformal isometry, i.e. $g^*\bar{\sigma}=K^2 \sigma$ where $K=(\alpha_0+\alpha_i Y_i)^{-1}$ and $(\alpha_0, \alpha_i)$ is a future timelike unit vector. 

Here is how the Poincar\`e group sits in the BMS group:

(1)  $f(x^a)$ is any smooth function on $S^2$ that is called a ``supertranslation".  $f(x^a)=\sum a_i Y_i$ corresponds to an actual translation in the Poincar\`e group.

(2)  $K=(\alpha_0+\sum \alpha_i Y_i)^{-1}$ corresponds to boosts in $O(3, 1)$. 

(3) Choices of $Y_i, i=1, 2, 3$ correspond to $O(3)\subset O(3,1)$.

The invariance/equivariance of the Bondi-Sachs energy-momentum is best described in terms of the modified mass aspect which is defined as:
\begin{equation}\label{mma}\widehat{m}=m-\frac{1}{4}\nabla^a\nabla^b C_{ab}, \end{equation} where $\nabla$ is the covariant derivative with respect to the metric $\sigma$.

Let $\widehat{m}$ and $\widehat{\bar{m}}$ be the modified mass aspects of the limiting Bond-Sachs coordinate systems $(u, x^a)$ and $(\bar{u}, \bar{x}^A)$ on $\mathscr{I}^+$, respectively. Suppose 
$(u, x^a)$ and $(\bar{u}, \bar{x}^A)$ are related by a BMS element $(K, f)$ as in \eqref{bms2}, then
 $\widehat{m}$ and $\widehat{\bar{m}}$ are related by
\begin{equation}\label{mass_aspect_transform} K^{-3}(\widehat{m}-\frac{1}{4} \Delta (\Delta+2)f)=\mathfrak{g}^* \widehat{\bar{m}}, \end{equation} see \cite[Section 6.9]{CJK} for the special case of this formula when $K=1$. In the next  subsection, we show that this formula implies the invariance of the Bondi-Sachs energy-momentum.
 In addition, equation \eqref{du_mass_aspect} becomes 
\begin{equation}\label{du_modified_mass}\partial_u\widehat{m}=-\frac{1}{8}|\partial_u C|_{\sigma}^2,\end{equation} and the modified mass aspect is pointwise non-increasing (mass loss formula). Note that  $m$ and $\widehat{m}$ define the same energy-momentum.

\subsection{BMS invariance}

In this subsection, let $\sigma=\sigma_{ab}$ be a round metric on a unit sphere $S^2$ and let $Y_i$ be an orthonormal basis of the $(-2)$ eigenspace of $\Delta_\sigma$ in the sense that 
\begin{equation}\label{orthonormal}\nabla Y_i \cdot \nabla Y_j=\delta_{ij}-Y_i Y_j, i, j=1, 2, 3,\end{equation} where $\nabla=\nabla_\sigma$ is the gradient operator of $\sigma$.  

Suppose $(\alpha_0, \alpha_1, \alpha_2, \alpha_3)$ is a future timelike unit 4-vector, i.e. $\alpha_0>0$ and $\alpha_0^2-\sum_i \alpha_i^2=1$ and denote $K=(\alpha_0+\sum_j \alpha_j Y_j)^{-1}$. It is well-known that the conformal metric $\bar{\sigma}=K^2 \sigma$ is of constant Gauss curvature $1$.

\begin{lemma} Let $\eta_{\alpha\beta}$ be the Minkowski metric. 
  Let $A^\alpha_\beta$ be an element of $O(3,1)$ such that $\eta_{\alpha\beta}=\eta_{\gamma\sigma} A^\gamma_\alpha A^\sigma_\beta$ (which also implies $\eta^{\gamma\sigma}=\eta^{\alpha\beta} A^\gamma_\alpha A^\sigma_\beta$) and $A^0_0=\alpha_0, A_0^k=\alpha_k, k=1, 2, 3$.  Then \[\bar{Y}_i=K(A^0_i+A^k_i Y_k), i=1, 2, 3\] form an orthonormal basis of the $-2$ eigenspace of the Laplace operator $\bar{\Delta}$ of $\bar{\sigma}$ in the sense that 
\[\bar{\nabla} \bar{Y}_i \cdot \bar{\nabla}  \bar{Y}_j    =\delta_{ij}- \bar{Y}_i \bar{Y}_j, i,j=1, 2, 3\]

\end{lemma}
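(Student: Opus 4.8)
The plan is to exploit the fact that the two data sets $\{\sigma, Y_i\}$ and $\{\bar\sigma, \bar Y_i\}$ both arise from a single null vector field in Minkowski space, with the Lorentz transformation $A$ interpolating between them. I would introduce the $\mathbb{R}^{3,1}$-valued function $L=(L^\gamma)=(1, Y_1, Y_2, Y_3)$ on $S^2$ and set $W_\alpha := A^\gamma_\alpha L^\gamma = A^0_\alpha + A^k_\alpha Y_k$. Since $A^0_0=\alpha_0$ and $A^k_0=\alpha_k$, one has $W_0 = \alpha_0 + \alpha_k Y_k = K^{-1}$, so that $\bar Y_i = K W_i = W_i/W_0$. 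Thus, after clearing the conformal factor, the claimed identity is a statement purely about the quotients $W_i/W_0$.

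Next I would reduce the computation to the background metric $\sigma$. Because $\bar\sigma = K^2\sigma$, the inverse metrics satisfy $\bar\sigma^{ab}=K^{-2}\sigma^{ab}$, and hence $\bar\nabla \bar Y_i\cdot_{\bar\sigma}\bar\nabla \bar Y_j = K^{-2}\,\nabla \bar Y_i\cdot\nabla \bar Y_j$, where $\nabla$ and $\cdot$ now denote the gradient and inner product of $\sigma$. It therefore suffices to prove $\nabla \bar Y_i\cdot\nabla \bar Y_j = K^2(\delta_{ij}-\bar Y_i\bar Y_j)$. The heart of the matter is a single master identity for the gradients of the components $W_\alpha$. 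Since $\nabla W_\alpha = A^k_\alpha \nabla Y_k$ (the constant $A^0_\alpha$ drops out), the orthonormality relation \eqref{orthonormal} gives $\nabla W_\alpha\cdot\nabla W_\beta = \sum_k A^k_\alpha A^k_\beta - (W_\alpha - A^0_\alpha)(W_\beta - A^0_\beta)$; combining this with the column relation $\sum_k A^k_\alpha A^k_\beta = \eta_{\alpha\beta} + A^0_\alpha A^0_\beta$, which is the spatial part of $\eta_{\gamma\delta}A^\gamma_\alpha A^\delta_\beta=\eta_{\alpha\beta}$, yields
\[ \nabla W_\alpha\cdot\nabla W_\beta = \eta_{\alpha\beta} - W_\alpha W_\beta + A^0_\alpha W_\beta + A^0_\beta W_\alpha. \]
I would then expand $\nabla(W_i/W_0)\cdot\nabla(W_j/W_0)$ by the quotient rule and substitute the three instances $(\alpha,\beta)=(i,j),\,(i,0),\,(0,0)$ of this identity.

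The main obstacle—and the step that must be executed with care—is the ensuing bookkeeping: a priori the expansion contains the unwanted terms $A^0_i$ and $A^0_j$ (the time-components of the spatial columns of $A$) together with various powers of $W_0$ multiplying $W_iW_j$. The plan is to show that these all cancel in pairs, the $A^0_i$ and $A^0_j$ contributions cancelling between the $(i,j)$ and the mixed $(i,0)$, $(0,j)$ terms, and the surplus $W_iW_j$ terms cancelling once $\eta_{00}=-1$ is used in the $(0,0)$ term. What survives is exactly $W_0^{-2}\delta_{ij}-W_0^{-4}W_iW_j = K^2(\delta_{ij}-\bar Y_i\bar Y_j)$, which after multiplication by $K^{-2}$ gives the asserted relation $\bar\nabla \bar Y_i\cdot_{\bar\sigma}\bar\nabla \bar Y_j = \delta_{ij}-\bar Y_i\bar Y_j$.

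Finally, I would record the two complementary points that justify calling $\{\bar Y_i\}$ an orthonormal basis of the $(-2)$-eigenspace. First, $\sum_i \bar Y_i^2 = 1$ follows from the dual orthonormality $\eta^{\gamma\sigma}=\eta^{\alpha\beta}A^\gamma_\alpha A^\sigma_\beta$ stated in the lemma (equivalently $A\eta A^{T}=\eta$), which makes $W$ a null vector, $-W_0^2+\sum_k W_k^2=0$, so that $\sum_k(W_k/W_0)^2=1$. Second, the proven relation together with $\sum_i\bar Y_i^2=1$ identifies $(\bar Y_1,\bar Y_2,\bar Y_3)$ as the restriction of the coordinate functions under an isometric embedding of $(S^2,\bar\sigma)$ onto the standard unit sphere, whence $\bar\Delta \bar Y_i = -2\bar Y_i$. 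I note that the core computation for the displayed relation uses only \eqref{orthonormal} and the column orthonormality of $A$, and in particular does not require $\sum_k Y_k^2=1$ as an input, although that normalization does itself follow from \eqref{orthonormal} since each $Y_i$ lies in the $\ell=1$ eigenspace.
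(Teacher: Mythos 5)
Your proof is correct and is exactly the ``direct calculation'' that the paper's one-line proof invokes (using $\bar{\sigma}=K^2\sigma$, $K^{-1}=\alpha_0+\sum_j \alpha_j Y_j$, the relation \eqref{orthonormal}, and the $O(3,1)$ column identity), just carried out in full; your master identity $\nabla W_\alpha\cdot\nabla W_\beta=\eta_{\alpha\beta}-W_\alpha W_\beta+A^0_\alpha W_\beta+A^0_\beta W_\alpha$ and the ensuing cancellations check out, yielding $\nabla\bar{Y}_i\cdot\nabla\bar{Y}_j=K^2(\delta_{ij}-\bar{Y}_i\bar{Y}_j)$ and hence the claim after the conformal rescaling. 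The additional observations (nullity of $W$ giving $\sum_i\bar{Y}_i^2=1$, and the eigenvalue property) are sound supplements to what the lemma's ``in the sense that'' formulation strictly requires.
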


\begin{proof}
This is a direct calculation using the definition $\bar{\sigma}=K^2 \sigma$, the formula $K^{-1}=\alpha_0+\sum_j \alpha_j Y_j$, and the relation \eqref{orthonormal}.\end{proof}
The following proposition shows how the energy-momentum transforms under a boost element of  the BMS group.

\begin{proposition}

Let $\mu$ be a function on $S^2$ and denote \[ {e}=\int \mu dv_{{\sigma}}, \,\, {p}_i=\int \mu Y_i dv_{{\sigma}},\]  where $dv_\sigma$ is the volume form of $\sigma$. 
Suppose $A^\alpha_\beta\in O(3,1)$ satisfies $A_0^0=\alpha_0, A_0^k=\alpha_k$ and let $\bar{Y}_i= (A_i^0+A_i^k Y_k)K, i=1, 2, 3$. Denote 
\[ \bar{e}=\int (K^{-3}\mu) dv_{\bar{\sigma}},\,\, \bar{p}_i=\int ( K^{-3} \mu \bar{Y}_i) dv_{\bar{\sigma}},\] where $dv_{\bar{\sigma}}$ is the volume form of $\bar{\sigma}$.  Then
\[\bar{e}=A_0^0e+A_0^k p_k,    \,    \bar{p}_i= A^0_i e+A_i^k p_k.\] In particular, $\bar{e}^2-\sum \bar{p}_i^2=e^2-\sum p_i^2$. 
\end{proposition}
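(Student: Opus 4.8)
The plan is to reduce both $\bar e$ and $\bar p_i$ to integrals against the \emph{unbarred} volume form $dv_\sigma$ and then read off the claimed linear relations directly. The key preliminary observation is the conformal weight of the area form: since $\bar\sigma = K^2\sigma$ on a two-dimensional manifold, one has $dv_{\bar\sigma} = K^2\,dv_\sigma$. Substituting this into the definitions collapses the cubic weight $K^{-3}$ to a single factor $K^{-1}$, giving
\begin{equation*}
\bar e = \int K^{-1}\mu\,dv_\sigma, \qquad \bar p_i = \int K^{-1}\mu\,\bar Y_i\,dv_\sigma.
\end{equation*}
This is the one place where the special geometry (two-dimensionality, and $\bar\sigma = K^2\sigma$) enters, and getting this power of $K$ right is the crux of the whole argument.

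Next I would substitute the explicit forms of $K^{-1}$ and $\bar Y_i$. By definition $K^{-1} = \alpha_0 + \sum_j \alpha_j Y_j = A_0^0 + A_0^k Y_k$, so the first integrand is $(A_0^0 + A_0^k Y_k)\mu$; integrating term by term against $dv_\sigma$ and using the definitions of $e$ and $p_k$ yields $\bar e = A_0^0 e + A_0^k p_k$ immediately. For the momenta, the definition $\bar Y_i = (A_i^0 + A_i^k Y_k)K$ gives $K^{-1}\bar Y_i = A_i^0 + A_i^k Y_k$, so the integrand becomes $(A_i^0 + A_i^k Y_k)\mu$ and the same term-by-term integration produces $\bar p_i = A_i^0 e + A_i^k p_k$. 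Thus both transformation laws are established with essentially no computation beyond the volume-form scaling; note that $\mu$ itself is arbitrary, so this is pure linear algebra once the weight is fixed.

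Finally, for the invariance of $e^2 - \sum_i p_i^2$, I would recognize the two transformation laws as saying that the $4$-vector $P = (e, p_1, p_2, p_3)$, with $P^0 = e$ and $P^i = p_i$, transforms linearly under the matrix $A$ (equivalently its transpose, which is immaterial since $O(3,1)$ is closed under transposition). Writing the mostly-minus Minkowski form $\eta = \mathrm{diag}(1,-1,-1,-1)$, for which $e^2 - \sum_i p_i^2 = \eta_{\alpha\beta}P^\alpha P^\beta$, the defining relation $\eta_{\gamma\delta}A^\gamma_\alpha A^\delta_\beta = \eta_{\alpha\beta}$ upon relabeling indices reads $\eta_{\alpha\beta}A^\alpha_\gamma A^\beta_\delta = \eta_{\gamma\delta}$, so contracting the transformed vector against $\eta$ gives $\eta_{\alpha\beta}\bar P^\alpha \bar P^\beta = \eta_{\gamma\delta}P^\gamma P^\delta$, which is exactly $\bar e^2 - \sum_i \bar p_i^2 = e^2 - \sum_i p_i^2$.

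I do not expect a genuine obstacle: the statement is a direct verification. The only two points requiring care are (a) the conformal weight, where it is precisely the two-dimensionality that makes $dv_{\bar\sigma} = K^2\,dv_\sigma$ cancel two of the three factors in $K^{-3}$ and leave the single factor $K^{-1}$ reproducing $A_0^0 + A_0^k Y_k$, and (b) matching index conventions so that the orthogonality relation of $A\in O(3,1)$ applies verbatim. As an alternative to the abstract $O(3,1)$ step, one could instead expand $(A_0^0 e + A_0^k p_k)^2 - \sum_i (A_i^0 e + A_i^k p_k)^2$ and collect terms, using the row and column orthonormality of $A$ with respect to $\eta$ to see that every cross term reassembles into $e^2 - \sum_i p_i^2$.
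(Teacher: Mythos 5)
Your proposal is correct and follows essentially the same route as the paper: reduce to the unbarred volume form via $dv_{\bar\sigma}=K^2\,dv_\sigma$, substitute $K^{-1}=A_0^0+A_0^kY_k$ and $K^{-1}\bar Y_i=A_i^0+A_i^kY_k$, integrate term by term, and then invoke the defining relation of $O(3,1)$ to get invariance of the Minkowski norm (the paper phrases this last step as the co-vector transformation law $\bar p_\alpha=A_\alpha^\beta p_\beta$ with $p_0=e$, which is the same observation as your transpose remark).
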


\begin{proof}
From the last lemma, we know that $\bar{Y}_i= K(A_i^0+A_i^k Y_k), i=1, 2, 3$ forms an orthonormal basis for the $(-2)$ eigenspace of $\Delta_{\bar{\sigma}}$. Since $dv_{\bar{\sigma}}=K^2 dv_{{\sigma}}$, we compute 

\[ \bar{e}=\int (K^{-3}\mu ) dv_{\bar{\sigma}}=\int (A_0^0+A_0^k Y_k) \mu dv_\sigma=A_0^0 \int \mu dv_\sigma+A_0^k \int Y_k \mu dv_\sigma\] and 

\[\bar{p}_i=\int (K^{-3} \mu)  \bar{Y}_i dv_{\bar{\sigma}}=\int (A_i^0+A_i^k Y_k) \mu dv_\sigma=A_i^0 \int \mu dv_\sigma+A_i^k \int Y_k \mu dv_\sigma.\]

Write $p_0=e, \bar{p}_0=\bar{e}$, the formula becomes $\bar{p}_\alpha=A_\alpha^\beta p_\beta, \alpha, \beta=0, 1, 2, 3$. Since we require $\eta_{\alpha\beta}=\eta_{\gamma\sigma} A^\gamma_\alpha A^\sigma_\beta$, the last formula is exactly how a co-vector transforms (a vector $v^\alpha$ transforms by $\bar{v}^\beta=A_\alpha^\beta v^\alpha$). 
\end{proof}

\section{A modified mass aspect 2-form}
\begin{definition}
The modified mass aspect 2-form $\mathfrak{m}$ of a limiting Bondi-Sachs coordinate system $(u, x^a, \sigma)$ of $\mathscr{I}^+$ is defined to be
\[\mathfrak{m}=\widehat{m} dv_\sigma\] where $\widehat{m}$ is the modified mass aspect $\widehat{m}$ defined in \eqref{mma} and $dv_\sigma=\sqrt{\det \sigma} dx^2\wedge dx^3$ is the volume form of the Riemannian metric $\sigma$. 
\end{definition}

In terms of the modified mass aspect 2-form, equation \eqref{mass_aspect_transform} becomes 
\begin{equation}\label{transformation_form}K^{-1}(\mathfrak{m}-\frac{1}{4} \Delta(\Delta+2) f dv_\sigma)=\mathfrak{g}^* \bar{\mathfrak{m}},\end{equation} where $\bar{\mathfrak{m}}=\widehat{\bar{m}} dv_{\bar{\sigma}}$ is the modified mass aspect 2-form of the limiting Bondi-Sachs coordinate system $(\bar{u}, \bar{x}^A, \bar{\sigma})$.

We can then integrate both sides of \eqref{transformation_form} on any section $\Sigma$ of $\mathscr{I}^+$ that is of the form $u=h(x^a)$ where $h$ is any continuous function.

\begin{proposition} For any section $\Sigma$ of $\mathscr{I}^+$, suppose $\mathfrak{m}$ and $\bar{\mathfrak{m}}$ are the modified mass aspect 2-forms of two Bondi-Sachs coordinate systems which are related by a BMS group element that is a pure supertranslation, then the energy integrals are the same \[\int_\Sigma \mathfrak{m}=\int_\Sigma \bar{\mathfrak{m}}.\]
In general, the energy-momentum are related by the boost associated with $K$. 
\end{proposition}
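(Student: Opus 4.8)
The plan is to integrate the transformation identity \eqref{transformation_form} over the section $\Sigma$ and to exploit the fact that a supertranslation enters only through the term $\frac{1}{4}\Delta(\Delta+2)f\, dv_\sigma$, which pairs to zero against the low spherical harmonics that produce the energy-momentum. At the outset I would record the elementary naturality of integration: if $\mathfrak{g}$ is the BMS diffeomorphism of $\mathscr{I}^+$ and $\bar{\Sigma}=\mathfrak{g}(\Sigma)$ is the image section (again a graph, since $\mathfrak{g}$ has the form \eqref{bms2}), then $\int_\Sigma \mathfrak{g}^*\bar{\mathfrak{m}}=\int_{\bar{\Sigma}}\bar{\mathfrak{m}}$, and likewise $\int_\Sigma \mathfrak{g}^*(\bar{Y}_i\bar{\mathfrak{m}})=\int_{\bar{\Sigma}}\bar{Y}_i\bar{\mathfrak{m}}$. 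Because $\Sigma=\{u=h(x^a)\}$ is a graph and both $dv_\sigma$ and $f$ depend only on $x^a$, restricting any form $\mu\, dv_\sigma$ to $\Sigma$ reduces its integral to $\int_{S^2}\mu\bigl(h(x^a),x^a\bigr)\, dv_\sigma$, so the problem becomes a pairing of functions on $(S^2,\sigma)$.

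For a pure supertranslation, $K=1$ and \eqref{transformation_form} reads $\mathfrak{m}-\frac{1}{4}\Delta(\Delta+2)f\, dv_\sigma=\mathfrak{g}^*\bar{\mathfrak{m}}$. Integrating over $\Sigma$ and applying the change-of-variables identity gives
\[
\int_\Sigma \mathfrak{m}-\frac{1}{4}\int_{S^2}\Delta(\Delta+2)f\, dv_\sigma=\int_{\bar{\Sigma}}\bar{\mathfrak{m}}.
\]
The middle term vanishes since the integral of a Laplacian over the closed surface $(S^2,\sigma)$ is zero, yielding $\int_\Sigma\mathfrak{m}=\int_{\bar{\Sigma}}\bar{\mathfrak{m}}$, i.e.\ the energy integrals coincide. (Pairing instead with $Y_i$ shows the full energy-momentum is preserved, because $\int_{S^2}Y_i\,\Delta(\Delta+2)f\, dv_\sigma=0$ after integrating by parts, using $(\Delta+2)Y_i=0$.)

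For a general element I would keep the weight $K^{-1}=\alpha_0+\sum_j\alpha_j Y_j$. Integrating \eqref{transformation_form} over $\Sigma$ and pulling back gives
\[
\bar{e}=\int_{\bar{\Sigma}}\bar{\mathfrak{m}}=\int_\Sigma K^{-1}\Bigl(\mathfrak{m}-\tfrac{1}{4}\Delta(\Delta+2)f\, dv_\sigma\Bigr).
\]
The decisive point is that $K^{-1}$ is a combination of the $\ell=0$ and $\ell=1$ harmonics and hence lies in the kernel of the self-adjoint operator $\Delta(\Delta+2)$; by self-adjointness $\int_{S^2}K^{-1}\Delta(\Delta+2)f\, dv_\sigma=\int_{S^2}\bigl(\Delta(\Delta+2)K^{-1}\bigr)f\, dv_\sigma=0$, so the supertranslation is again invisible. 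Thus $\bar{e}=\int_{S^2}(\alpha_0+\sum_j\alpha_j Y_j)\widehat{m}\, dv_\sigma=\alpha_0 e+\sum_j\alpha_j p_j=A_0^0 e+A_0^k p_k$. For the momentum I would pair with $\bar{Y}_i$ and use the Lemma, $\mathfrak{g}^*\bar{Y}_i=K(A_i^0+A_i^k Y_k)$, so that the combined weight $K^{-1}\,\mathfrak{g}^*\bar{Y}_i=A_i^0+A_i^k Y_k$ is once more a pure $\ell\le 1$ combination; the supertranslation term drops out and $\bar{p}_i=A_i^0 e+A_i^k p_k$. Writing $p_0=e$, this is $\bar{p}_\alpha=A_\alpha^\beta p_\beta$, exactly the covector law of the previous Proposition, so the energy-momentum transforms by the boost associated with $K$.

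The main obstacle, and the conceptual heart of the argument, is the orthogonality step: one must verify that the entire supertranslation content $\frac{1}{4}\Delta(\Delta+2)f\, dv_\sigma$ annihilates against the weights $K^{-1}$ and $K^{-1}\mathfrak{g}^*\bar{Y}_i$. This rests on the algebraic fact that $\Delta(\Delta+2)$ kills precisely the $\ell=0$ and $\ell=1$ eigenspaces together with its self-adjointness; equivalently, supertranslations act trivially on the $\ell\le 1$ part of the mass aspect, which is what carries the energy-momentum. A secondary bookkeeping issue is tracking the conformal weights $dv_{\bar{\sigma}}=K^2\,dv_\sigma$ and $\mathfrak{g}^*\bar{Y}_i=K(A_i^0+A_i^k Y_k)$, which combine with the $K^{-3}$ of \eqref{mass_aspect_transform} (the $K^{-1}$ of \eqref{transformation_form}) to leave exactly the bare $\ell\le 1$ weights needed to invoke the previous Proposition.
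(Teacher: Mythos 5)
Your proposal is correct and takes essentially the same route as the paper: integrate \eqref{transformation_form} over the section, note that the supertranslation term $\frac{1}{4}\Delta(\Delta+2)f\,dv_\sigma$ integrates to zero against the $\ell\le 1$ weights ($1$, $K^{-1}$, and $K^{-1}\mathfrak{g}^*\bar{Y}_i$) by self-adjointness of $\Delta(\Delta+2)$, and then recover the boost law from the earlier Proposition on the transformation of $(e,p_i)$. The paper leaves these steps implicit (it states the Proposition right after remarking that one may integrate \eqref{transformation_form} over any section), and your orthogonality argument is exactly the intended justification.
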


We note that in this formulation, $\Sigma$ does not need to be the level set of any Bondi-Sachs coordinate $u$ on $\mathscr{I}^+$.

The energy loss formula \eqref{energy_loss2} can also be extended by means of the modified mass aspect two-form. 

\begin{definition}For any two sections $\Sigma_1$ and $\Sigma_2$ on $\mathscr{I}^+$, $\Sigma_1$ is said to be in the retarded future of $\Sigma_2$ if there exists a limiting Bondi-Sachs coordinate system $(u, x^a)$ such that $\Sigma_1$ and $\Sigma_2$ are given by $u=h_1 (x^a)$ and $u=h_2 (x^a)$ respectively, and that $h_1 (x^a)\geq h_2 (x^a)$ for each $x^a\in S^2$. 
\end{definition}
One easily check that this notion is independent of the choice of the limiting Bondi-Sachs coordinate system because $K>0$ and \eqref{bms2}. 

\begin{theorem} For any two sections $\Sigma_1$ and $\Sigma_2$ on $\mathscr{I}^+$ such that $\Sigma_1$ is in the retarded future of $\Sigma_2$, we have 
\[\int_{\Sigma_1} \mathfrak{m}\leq \int_{\Sigma_2} \mathfrak{m}.\]
 
\end{theorem}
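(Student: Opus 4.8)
The plan is to reduce the comparison of the two energy integrals to a single pointwise inequality on $S^2$, exploiting the fact that the modified mass aspect obeys the \emph{pointwise} mass loss formula \eqref{du_modified_mass} rather than merely its integrated consequence \eqref{energy_loss1}. This is exactly the feature that distinguishes $\widehat m$ from $m$: the divergence term $\frac14\nabla^a\nabla^b(\partial_u C_{ab})$ appearing in \eqref{du_mass_aspect} has been absorbed into the definition of $\widehat m$, so that $\partial_u\widehat m\le 0$ holds at every point and not only after integrating over a round sphere.

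First I would invoke the definition of the retarded future: by hypothesis there is a single limiting Bondi-Sachs coordinate system $(u,x^a)$ in which $\Sigma_1$ and $\Sigma_2$ are the graphs $u=h_1(x^a)$ and $u=h_2(x^a)$ with $h_1(x^a)\ge h_2(x^a)$ for every $x^a\in S^2$. Working entirely inside this one coordinate system is the key simplification, since it avoids any appeal to the transformation law \eqref{transformation_form}. I would then make the surface integrals explicit: because $\mathfrak m=\widehat m\,dv_\sigma$ and the volume form $dv_\sigma=\sqrt{\det\sigma}\,dx^2\wedge dx^3$ contains no $du$, its pullback under the graph map $x^a\mapsto(h(x^a),x^a)$ is unchanged, giving
\[\int_{\Sigma_i}\mathfrak m=\int_{S^2}\widehat m\big(h_i(x^a),x^a\big)\,dv_\sigma,\qquad i=1,2.\]
Next, \eqref{du_modified_mass} shows that for each fixed $x^a$ the function $u\mapsto\widehat m(u,x^a)$ is non-increasing, so $h_1(x^a)\ge h_2(x^a)$ forces $\widehat m(h_1(x^a),x^a)\le\widehat m(h_2(x^a),x^a)$ pointwise; integrating this against the positive measure $dv_\sigma$ yields $\int_{\Sigma_1}\mathfrak m\le\int_{\Sigma_2}\mathfrak m$.

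The only genuine bookkeeping is in the middle step — confirming that integrating the horizontal $2$-form $\mathfrak m$ over a graph section collapses to the displayed scalar integral, and that the pointwise comparison may be integrated even though $h_1$ and $h_2$ are merely continuous. Both are routine: the pullback identity is immediate since $dv_\sigma$ involves no $du$, and a pointwise inequality integrated against the fixed finite measure $dv_\sigma$ on $S^2$ is preserved irrespective of regularity. The conceptual content has already been supplied by passing to the modified mass aspect; once the pointwise sign of $\partial_u\widehat m$ is available, monotonicity of the energy across sections is essentially automatic, which is the point of formulating the mass loss in terms of the $2$-form $\mathfrak m$.
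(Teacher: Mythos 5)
Your proposal is correct, and it rests on exactly the same key ingredient as the paper's proof: the pointwise mass loss formula \eqref{du_modified_mass} for the modified mass aspect, applied within the single Bondi--Sachs system furnished by the retarded-future hypothesis. The execution differs, though. The paper recasts \eqref{du_modified_mass} as the exterior-derivative identity \eqref{exterior_d}, $d\mathfrak{m}=-\frac{1}{8}|\partial_u C|_\sigma^2\, du\wedge dv_\sigma$, integrates it over the three-dimensional region between $\Sigma_2$ and $\Sigma_1$, and invokes Stokes's theorem, which yields the quantitative identity \eqref{general_loss} expressing the energy difference as (minus) the flux of $|\partial_u C|_\sigma^2$ through the intervening region. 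You instead pull $\mathfrak{m}$ back to the $S^2$ base along each graph (legitimate, as you note, because $dv_\sigma$ has no $du$ component) and integrate the monotonicity of $u\mapsto\widehat{m}(u,x^a)$ along each generator. The two computations are Fubini-equivalent, but each has a virtue: your fiber-wise argument is more elementary --- fundamental theorem of calculus in $u$ plus integration over $S^2$, no Stokes --- and it handles merely continuous $h_1, h_2$ without comment, whereas Stokes's theorem over a region bounded by two continuous (possibly non-smooth) graphs strictly speaking requires the same Fubini reduction you perform, or an approximation argument. Conversely, the paper's phrasing makes the coordinate-free point that $\mathfrak{m}$ is a $2$-form on the manifold $\mathscr{I}^+$ whose exterior derivative has a sign --- the conceptual payoff of Section 3 --- and it records the exact deficit \eqref{general_loss} rather than only the inequality; your argument contains that identity implicitly (integrate $\partial_u\widehat{m}$ in $u$ rather than merely citing monotonicity), and it would strengthen your write-up to state it.
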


\begin{proof}

Equation \eqref{du_modified_mass} implies that the mass aspect 2-form $\mathfrak{m}$, as a 2-form on the three-manifold $\mathscr{I}^+$, verifies
\begin{equation}\label{exterior_d} d\mathfrak{m}=-\frac{1}{8} |\partial_u C |_\sigma^2 du\wedge dv_\sigma , \end{equation} where $d$ is the exterior derivative operator on $\mathscr{I}^+$ as a differentiable manifold.

 Integrating both sides of \eqref{exterior_d} over the region between $\Sigma_1$ and $\Sigma_2$  and applying Stokes's theorem yield
\begin{equation}\label{general_loss}\int_{\Sigma_1}\mathfrak{m}-\int_{\Sigma_2} \mathfrak{m}=-\frac{1}{8}\int_{u=h_2(x^a)}^{u=h_1(x^a)} \int_{(S^2, \sigma)} |\partial_u C |_\sigma^2  du\wedge dv_\sigma,\end{equation} which is non-positive by the retarded future condition. 

\end{proof}

This theorem should be considered as an extension of the classical Bondi mass loss formula \eqref{energy_loss2} which only applies to the case when $\Sigma_1$ and $\Sigma_2$  are both smooth and $u$ level sets of a fixed Bondi-Sachs coordinate system. 
\begin{center}
{\includegraphics[height=5cm, angle=0]{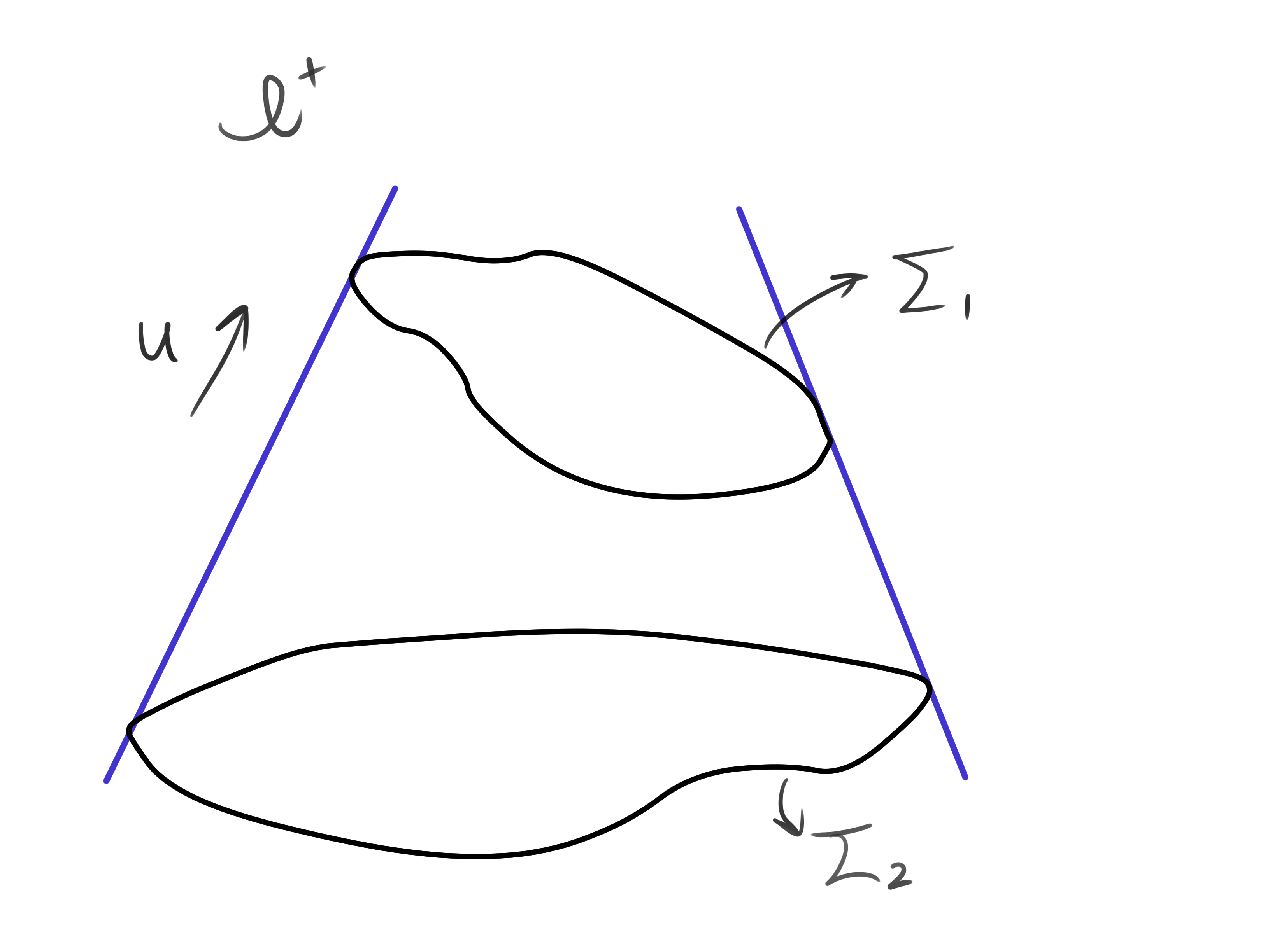}}
\end{center}

\section{Wang-Yau quasilocal mass}
 The quasilocal mass is attached to a 2-dimensional closed surface $\Sigma$ which bounds a spacelike region in spacetime. $\Sigma$ is assumed to be a topological 2-sphere, but with different intrinsic geometry and extrinsic geometry, we expect to read off the effect of gravitation in the spacetime vicinity of the surface. Suppose the surface is spacelike, i.e. the induced metric $\sigma$ is Riemannian.  An essential part of the extrinsic geometry is measured by the mean curvature vector field $\bf{H}$ of $\Sigma$. $\bf{H}$ is a normal vector field of the surface such that the null expansion along any null normal direction $\ell$ is given by the pairing of
$\bf{H}$ and $\ell$.

In \cite{Wang-Yau1}, Wang-Yau proposed the following definition of quasilocal mass.  To evaluate the quasilocal mass of a 2-surface $\Sigma$ with the physical data $(\sigma, \bf{H})$, one first solves the optimal isometric embedding equation, see \eqref{oiee} below, which gives an embedding of $\Sigma$ into the Minkowski spacetime with the image
surface $\Sigma_0$ that has the same 
induced metric as $\Sigma$, i.e. $\sigma$. One then compares the extrinsic geometries of $\Sigma$ and $\Sigma_0$ and evaluates the quasilocal mass from $\sigma, \bf{H}$ and $\bf{H_0}$.

Assuming the mean curvature vector ${\bf H}$ is spacelike, the physical surface $\Sigma$ with physical data $(\sigma, \bf{H})$ gives $(\sigma, |\bf{H}|, \alpha_{\bf {H}})$ where $|{\bf H}|>0$ is the Lorentz norm of $\bf{H}$ and $\alpha_{\bf H}$ is the connection one-form determined by $\bf{H}$. Given an isometric embedding $X:\Sigma\rightarrow \R^{3,1}$ of $\sigma$. Let $\Sigma_0$ be the image $X(\Sigma)$ and $(\sigma, |\bf{H}_0|, \alpha_{\bf {H}_0})$ be the corresponding data of $\Sigma_0$.

Let $T$ be a future timelike unit Killing field of $\R^{3,1}$ and define $\tau=-\langle X, T\rangle$. Define a function $\rho$ and a 1-form $j_a$ on $\Sigma$:
  \[ \begin{split}\rho &= \frac{\sqrt{|{\bf H}_0|^2 +\frac{(\Delta \tau)^2}{1+ |\nabla \tau|^2}} - \sqrt{|{\bf H}|^2 +\frac{(\Delta \tau)^2}{1+ |\nabla \tau|^2}} }{ \sqrt{1+ |\nabla \tau|^2}}\\
 j_a&=\rho {\nabla_a \tau }- \nabla_a \left( \sinh^{-1} (\frac{\rho\Delta \tau }{|{\bf H}_0||{\bf H}|})\right)-(\alpha_{{\bf H}_0})_a + (\alpha_{{\bf H}})_a, \end{split}\] where $\nabla_a$ is the covariant derivative with respect to the metric $\sigma$, $|\nabla \tau|^2=\nabla^a \tau\nabla_a \tau$ and $\Delta \tau=\nabla^a\nabla_a \tau$. 
$\rho$ is the quasilocal mass density and $j_a$ is the quasilocal momentum density. A full set of quasilocal conserved quantities was defined in \cite{Chen-Wang-Yau3, Chen-Wang-Yau4} using $\rho$ and $j_a$. 
 
The optimal isometric embedding equation for $(X, T)$ is 
\begin{equation}\label{oiee} \begin{cases}
\langle dX, dX\rangle&=\sigma\\
\nabla^a j_a&=0.
\end{cases}\end{equation}
The first equation is the isometric embedding equation into the Minkowski spacetime and the second one is the Euler-Lagrange equation of the quasilocal energy \cite{Wang-Yau1,Wang-Yau2} in the space of isometric embeddings.
The quasi-local mass is defined to be \[E(\Sigma, X, T)=\frac{1}{8\pi}\int_\Sigma \rho.\]

 $\Sigma_0$ is essentially the ``unique" surface in the Minkowski spacetime that best matches the physical surface $\Sigma$. 
If the original surface $\Sigma$ happens to be a surface in the Minkowski spacetime, the above procedure identifies $\Sigma_0=\Sigma$ up to a global isometry. In \cite{Chen-Wang-Yau2}, we prove such a minimizing and uniqueness property for a solution of the optimal isometric embedding equation. 

A prototype form of the quasilocal mass (see Brown-York \cite{BY2}, Liu-Yau \cite{Liu-Yau1}, Booth-Mann \cite{Booth-Mann}, Kijowski \cite{KI}, etc) is 
\[\frac{1}{8\pi}\int_\Sigma |\bf H_0|-\frac{1}{8\pi}\int_\Sigma |\bf H|.\] 
The positivity is proved by Shi-Tam \cite{Shi-Tam}  and Liu-Yau \cite{Liu-Yau2}. However, for a surface in the Minkowski spacetime, the above expression may not be zero \cite{OST}. The optimal isometric embedding equation gives the necessary correction, so the definition of Wang-Yau is {\it positive} in general, and zero for surfaces in the Minkowski spacetime, see \cite{Wang-Yau2}. In general, the optimal isometric embedding equation is difficult to solve. However, in a perturbative configuration, when a family of surfaces limit to a surface in the Minkowski spacetime, the optimal isometric embedding equation is solvable, subject to the positivity of the limiting mass \cite{Chen-Wang-Yau2}.

 \section{Large sphere limit at null infinity}
 
  In \cite{Chen-Wang-Yau1}, we evaluate the large sphere limit of quasilocal mass at $\mathscr{I}^+$ which recovers the Bondi-Sachs energy momentum. At a retarded time $u=u_0$, we consider the family of large spheres $\Sigma_r$ parametrized by the Bondi-Sachs coordinate $r$. The positivity of the Bondi mass guarantees the unique solvability of the optimal isometric embedding system \eqref{oiee} with a solution $(X_r, T_r)$. Suppose $X_r$ and  $T_r$ admit 
expansions:
\[ T_r=T^{(0)}+
\sum_{k=1}^\infty T^{(-k)}r^{-k}\]
\[X_r=rX^{(1)}+X^{(0)}+\sum_{k=1}^\infty X^{(-k)}r^{-k},\] then 
$T^{(0)}$ is shown  to be proportional to the Bondi-Sachs energy-momentum  in \cite[Theorem 2]{Chen-Wang-Yau1}  and $T^{(0)}$ being future timelike makes $T^{(-k)}$ and $X^{(-k+1)}$ solvable inductively for $k=1, 2\cdots$.

In \cite{Chen-Wang-Yau_hyperbolic}, we also evaluate the large sphere limit of quasilocal mass on an asymptotically hyperbolic initial data set. This in particular gives a new proof of the positivity of the Bondi mass.

\section{Unit sphere limit at null infinity}
 
Both the positivity of Bondi mass \eqref{pmt} and the mass loss formula \eqref{energy_loss1} are global statements 
about $\mathscr{I}^+$ that require the information in all direction of $(\theta, \phi)$ on $S^2_\infty$. In \cite{Chen-Wang-Yau6, Vaidya, CWWY1, CWWY2, CWWY3}, we study the limit of quasilocal mass along a single direction $(\theta, \phi)$ and obtain quasilocal quantities
at $\mathscr{I}^+$.  Consider a null geodesic $\gamma$ with affine parameter $s$ that approaches $\mathscr{I}^+$. Around each point $\gamma(s)$, consider a geodesic 2-sphere $\Sigma_s$ of unit radius. 
 The geometry of $\Sigma_s$ approaches the geometry of a standard unit round sphere of $\R^3$.

 \begin{center}
{\includegraphics[height=5cm]{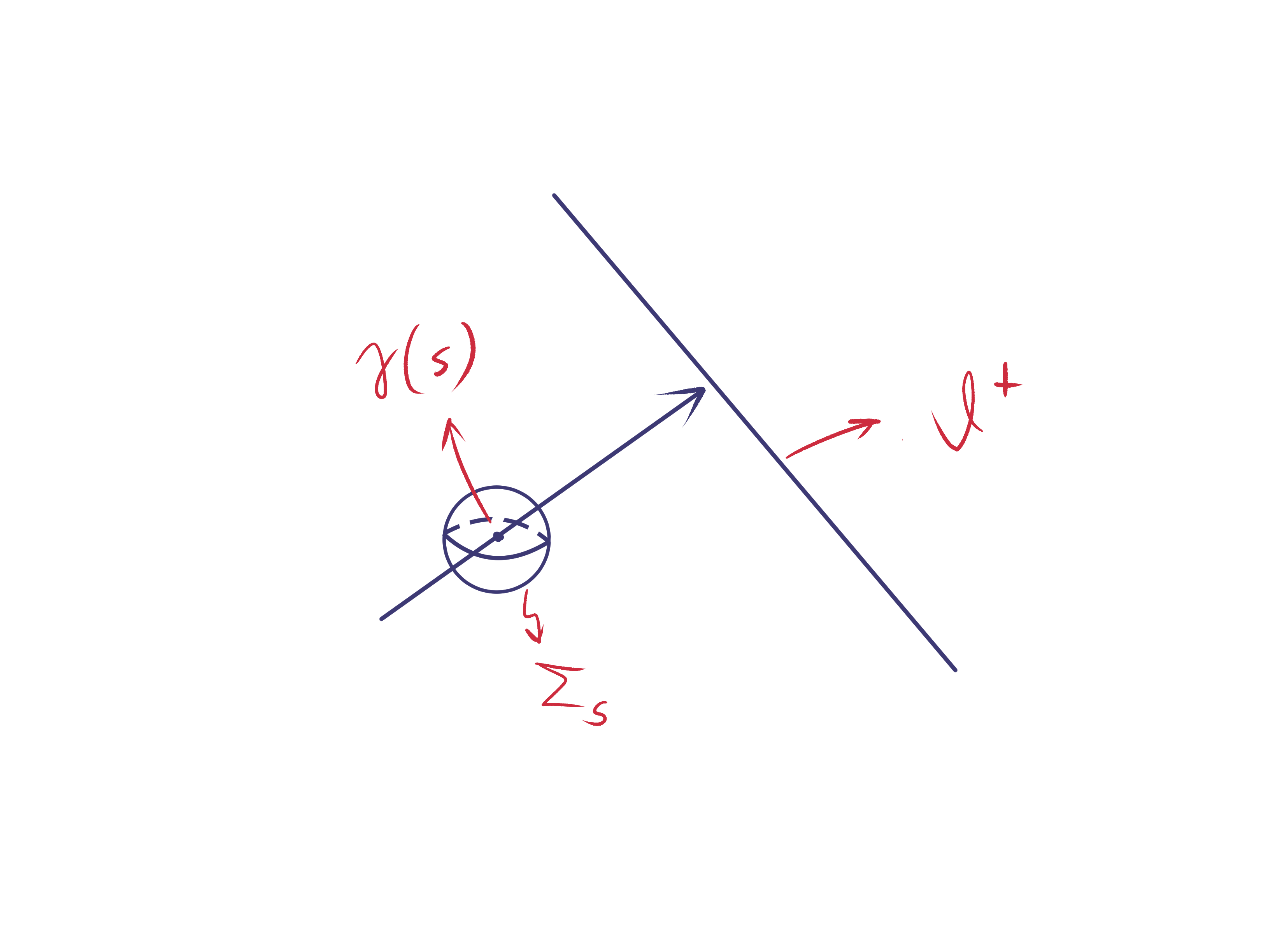}}
\end{center}

  In the limit $s\rightarrow \infty$, we obtain the limit of the quasilocal mass $\lim_{s\rightarrow \infty} E(\Sigma_s)$ which is of the order of $\frac{1}{s^2}$ with $E(\Sigma_s)\geq 0$. In \cite{Chen-Wang-Yau6}, we study the case of  linear gravitational perturbation of the Schwarzschild black hole \`a la Chandrasekhar \cite{Chandrasekhar}. The linearized vacuum Einstein equation is solved by separation of variables
and solutions of linearized waves are obtained. The optimal isometric embedding system can be solved and the quasilocal mass can be evaluated by solving equations of the following forms on the standard 2-sphere $S^2$:
\begin{equation}\label{ode} \begin{split}\Delta (\Delta+2)\tau&=\text{physical  data},\\
(\Delta+2)N&=\text{physical  data},\end{split}\end{equation} where $\tau$ and $N$ are functions on the standard 2-sphere and $\Delta$ is the Laplace operator. All distinctive features of the linearized waves such as frequency and mode parameters are recovered.

 In \cite{Vaidya}, we study the case of the Vaidya spacetime. The metric of the Vaidya spacetime takes the form:
\[-(1-\frac{2m(u)}{r})du^2-2dudr+r^2 (d\theta^2+\sin^2\theta d\phi^2).\] The quasilocal mass of a unit sphere approaching null infinity is computed in \cite{Vaidya}:
\[E(\Sigma_s)=-\frac{1}{16\pi s^2} \int_{S^2} (\partial_u m) f^2+l.o.t. \geq 0\]
In particular, the positivity of quasilocal mass corresponds to the mass loss formula in the Vaidya case.

 What happens in the general case? One may expect that the limit of quasilocal mass in the direction of $(\theta_0, \phi_0)$ should recover the mass aspect function $m(\cdot, \theta_0, \phi_0)$. But notice that 
the mass aspect function is not pointwise positive, only the integrated Bondi mass is positive. 
Besides, the positivity of Bondi mass requires a global assumption on horizon \cite{SY, HP}. For example, on a Schwarzschild spacetime with negative mass parameter (thus there is a naked singularity that is not shielded by a horizon),
the Bondi mass is negative, but the quasilocal mass is still positive. On the other hand, the Vaidya spacetime is non-vacuum, indeed both the dominant energy condition and the mass loss formula correspond to $\partial_u m \leq 0$.

In order to understand such a quasilocal mass at the purely gravitational level, we compute the case for the null infinity of a general spacetime in a Bondi-Sachs coordinate system \cite{CWWY2, CWWY3}. 
The $\frac{1}{s^2}$ term of the quasilocal mass of a unit sphere approaching null infinity is (up to a constant factor)
\begin{equation}\label{qlr} \int_{B^3} [\frac{1}{8} |\partial_u C|^2+\det (h_0^{(-1)}-h^{(-1)})]+\frac{1}{4}\int_{S^2}[ (tr_\Sigma k^{(-1)})^2-\tau^{(-1)}{\Delta}({\Delta}+2)\tau^{(-1)}],\end{equation} in which
$h^{(-1)}$ and $k^{(-1)}$ are part of the physical data and $h_0^{(-1)}$ and $\tau^{(-1)}$ depend on the solution of the optimal isometric embedding system. A priori, the expression may depend on the mass aspect and the shear tensor.  However, some rather amazing cancellations show that the answer depends only  on the shear tensor:

\begin{theorem}\cite{CWWY2, CWWY3}
 The limit of the quasilocal mass of unit-size sphere at $\mathscr{I}^+$ is a positive quasilocal quantity that depends only on the shear tensor.  
 \end{theorem}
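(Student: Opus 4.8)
The plan is to begin from the explicit coefficient \eqref{qlr} of the $\frac{1}{s^2}$ term and show that, once the optimal isometric embedding system is solved, every occurrence of the mass aspect $m$ cancels, leaving a shear-only expression; positivity will then be inherited from the positivity of the Wang-Yau mass at each finite $s$. First I would record the asymptotic expansions of the physical data of the unit geodesic sphere $\Sigma_s$. As $s\to\infty$ the induced metric, the norm $|\mathbf{H}|$ of the mean curvature vector, and the connection one-form $\alpha_{\mathbf{H}}$ all converge to those of a unit round sphere in $\R^3$, and the subleading corrections $h^{(-1)}$ and $k^{(-1)}$ are linear in the mass aspect $m$, the shear $C_{ab}$, and their angular derivatives evaluated at the limiting point on $\mathscr{I}^+$. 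The purpose of this bookkeeping is to isolate precisely how $m$ enters $h^{(-1)}$ and $k^{(-1)}$.

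Next I would solve the optimal isometric embedding system \eqref{oiee} perturbatively. At leading order the solution is the standard embedding of the round unit sphere into a constant-time slice of $\R^{3,1}$ with $T=\pl_t$; the first correction is governed by the linearized system, which reduces to equations of the form \eqref{ode}, namely $\Delta(\Delta+2)\tau^{(-1)}=(\text{data})$ and $(\Delta+2)N=(\text{data})$ on the standard $S^2$. These operators are invertible away from their kernels (the $\ell=0,1$ and $\ell=1$ spherical harmonics, respectively), and the kernels correspond exactly to the translation and boost freedoms, so $\tau^{(-1)}$ and the filling datum $h_0^{(-1)}$ are determined up to this Poincar\'e gauge, which does not affect the mass. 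The key structural observation I would aim to establish is that the source driving $\tau^{(-1)}$ carries exactly the same $m$-dependence as the corresponding piece of $k^{(-1)}$.

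The main obstacle is the cancellation itself. Substituting the solution into \eqref{qlr}, the mass aspect enters the boundary integrand both through $(\mathrm{tr}_\Sigma k^{(-1)})^2$ and through $\tau^{(-1)}\Delta(\Delta+2)\tau^{(-1)}$. I would integrate $-\tau^{(-1)}\Delta(\Delta+2)\tau^{(-1)}$ by parts and use the equation $\Delta(\Delta+2)\tau^{(-1)}=(\text{data})$ to trade the fourth-order term for a pairing of $\tau^{(-1)}$ against its source; the $m$-linear part of this pairing must then exactly cancel the $m$-linear part of $(\mathrm{tr}_\Sigma k^{(-1)})^2$, while the $m$-dependence entering the bulk determinant $\det(h_0^{(-1)}-h^{(-1)})$ must cancel separately. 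This is the delicate point, since it requires matching the contribution of $m$ through the embedding data $\tau^{(-1)}, h_0^{(-1)}$ against its contribution through the physical mean curvature in $k^{(-1)}$; I would organize the computation as the assertion that the $m$-dependent part of the total integrand integrates to zero, so that only the shear-dependent remainder survives. It is worth noting that even the combination $\widehat{m}$ of \eqref{mma} is not what survives: the angular-derivative-of-$C$ piece must drop out along with $m$.

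Finally, positivity comes essentially for free. The Wang-Yau mass satisfies $E(\Sigma_s)\geq 0$ for every sufficiently large $s$, and $E(\Sigma_s)=c\,s^{-2}+o(s^{-2})$ with $c$ given by \eqref{qlr}; multiplying by $s^2$ and letting $s\to\infty$ forces $c\geq 0$. Alternatively, after the cancellation one may verify positivity directly: the bulk term $\int_{B^3}\frac{1}{8}|\pl_u C|^2$ is manifestly nonnegative, and the determinant term can be combined with the boundary quadratic form, via the filling $h_0^{(-1)}$ and one further integration by parts, into a sum of squares in the shear. I expect the first route to be the clean one, with the cancellation of $m$ remaining the genuine technical heart of the theorem.
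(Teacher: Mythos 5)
Your outline reproduces, at the level of structure, exactly the narrative the paper gives around this theorem: the $s^{-2}$ coefficient \eqref{qlr}, the perturbative solution of the optimal isometric embedding system \eqref{oiee} via equations of the form \eqref{ode}, and positivity by multiplying $E(\Sigma_s)\geq 0$ by $s^2$ and letting $s\to\infty$. Be aware, however, that the paper itself offers no proof --- the theorem is quoted from \cite{CWWY2, CWWY3} --- and the entire mathematical substance is what the paper calls the ``rather amazing cancellations.'' That is precisely the step your proposal does not carry out.

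Concretely, the gap is this: you assert that the $m$-linear part of the pairing of $\tau^{(-1)}$ against its source ``must'' cancel the $m$-linear part of $(\mathrm{tr}_\Sigma k^{(-1)})^2$, and that the $m$-dependence of $\det(h_0^{(-1)}-h^{(-1)})$ ``must cancel separately.'' Neither claim follows from anything you have set up. You never compute the expansions of $h^{(-1)}$ and $k^{(-1)}$ in terms of $m$, $C_{ab}$, $\partial_u C_{ab}$, nor the explicit right-hand sides of the equations determining $\tau^{(-1)}$ and $h_0^{(-1)}$, so the claimed identities cannot even be checked; and your additional structural assumption that the bulk and boundary cancellations decouple is itself unjustified --- a priori the mass aspect could drop out only through an interplay between the $B^3$ integral and the $S^2=\partial B^3$ integral after integration by parts. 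Since the theorem \emph{is} the statement that these cancellations occur, asserting them is assuming the conclusion; a proof must exhibit them by explicit computation, which is what \cite{CWWY2, CWWY3} do. Your first positivity route is sound modulo two caveats: it yields only nonnegativity of the limit, and it presupposes $E(\Sigma_s)\geq 0$ for the perturbative critical point, which requires verifying that $(X_s,T_s)$ is admissible in the sense of \cite{Wang-Yau2} so that the positivity theorem applies (the dominant energy condition is automatic in vacuum, but admissibility of the nearly-round data is a separate check). Your second positivity route (a sum-of-squares rearrangement in the shear) is again asserted rather than derived.
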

This quantity should be considered as a quasilocal measurement of radiation. It is very interesting to compare \eqref{qlr} with the global loss formula \eqref{general_loss}. The first term $\frac{1}{8} \int_{B^3} |\partial_u C|^2$ in \eqref{qlr} should be considered as the principal term of radiation, which also appears in \eqref{general_loss}, and other terms in \eqref{qlr} should be considered as correction terms due to the quasilocal nature of the quantity.



\end{document}